\journal{Journal of Artificial Intelligence}
\newcommand{\U}{\ensuremath{\mathcal{U}}\xspace}
\renewcommand{\phi}{\varphi}
\newcommand{\limpl}{\supset}
\newcommand{\prefix}{\preceq}
\newcommand{\atom}[1]{\mathtt{#1}}
\newcommand{\seq}[1]{\langle#1\rangle}
\newcommand{\set}[1]{\{#1\}}
\newcommand{\before}{<}
\newcommand{\past}{\Diamond}
\newcommand{\always}{\square}
\newcommand{\Lang}{\ensuremath{\mathcal{L}}\xspace}
\newcommand{\BLang}{\ensuremath{\boldsymbol{\mathcal{L}}}\xspace}
\newcommand{\NP}{NP\xspace}
\newcommand{\iec}{i.e.,\xspace}
\newcommand{\egc}{e.g.,\xspace}
\newcommand{\resp}{resp.,\xspace}
\newtheorem{definition}{Definition}
\newtheorem{theorem}{Theorem}
\newtheorem{lemma}{Lemma}
\begin{document}

\begin{frontmatter}

\title{A Knowledge Representation Perspective on \\ Activity Theory}

\author{Johannes Oetsch and Juan-Carlos Nieves}
\address{		Department of Computing Science\\
	Ume\aa{} University, Sweden\\
	\url{{joetsch,jcnieves}@cs.umu.se}}


%

\begin{abstract}
Intelligent technologies, in particular systems to  promote health and well-being, 
are inherently centered around the human being, and they need to interrelate with human activities at their core. While social sciences provide angles to study such activities, \egc within the framework of cultural-historical activity theory, there is no formal approach to give an account of complex human activities from a 
Knowledge Representation and Reasoning (KR) perspective. Our goal is to develop a  logic-based framework to specify complex activities that is directly informed by activity theory. 
There, complex activity refers to the process that mediates the relation between a subject and some motivating object which in turn generates a hierarchy of goals that direct actions. We introduce a new temporal logic to formalise key concepts from activity theory and study various inference problems in our framework.  We furthermore discuss how to use Answer-Set Programming as a KR shell for activity reasoning that allow to solve various reasoning tasks in a uniform way. 
\end{abstract}

\begin{keyword}
activitiy theory \sep knowledge representation \sep temporal logic \sep answer-set programming
\end{keyword}

\end{frontmatter}


\section{Introduction}\label{sec:introduction}

The human being is at the heart of intelligent technology like smart assistants or coaching systems
that helps to promote health and well being and facilitates independence for, \egc elderly or 
impaired people.   
Such systems often need to reason about \emph{complex human activities} in different ways. 
This involves tasks like recognising, predicting, or explaining  activities.
An adequately elaborate  notion of activity seems thus fundamental to such  technology. 
%
There is a considerable body of literature on 
recognising plans, intends, and activities of agents~\cite{Sukthankar2014a}. 
Interestingly however, we are  not aware of any formal approach in the literature that attempts to answer the  question ``What is complex human activity?'' 
from a Knowledge Representation and Reasoning (KR) perspective that is 
based on a notion of activity that goes beyond an ad hoc or intuitive understanding--- 
a serious attempt at an answer would be a significant contribution.

A quite detailed picture of the nature of human activity is provided by social science, in particular by  
\emph{cultural-historical activity theory}~\cite{Leontiev1978,Kaptelinin2006c}. 
This theory originates from Soviet psychology as an attempt to study the human mind in a socio-cultural context. 
In a nutshell, activity is understood as a mediated process that relates a subject and an object. 
Activities involve motives and hierarchically structured goals. While the activity is oriented towards the motive, 
actions are targeted at goals. Activities thus take place at different levels: the level of motives, the level of actions and goals, and the operational level where actions are actually executed. 
%
%
%
%

The vision of this work is to develop a more elaborate understanding of complex human activities from a KR perspective that
draws its inspiration directly from cultural-historical activity theory. 
Our motivation to resort to activity theory is to 
enable an domain expert to express knowledge using his or her terminology and within the experts conceptual framework. Indeed, activity theory is 
established to some degree in areas like health care and occupational therapy~\cite{Salo-Chydenius1990}.  
We want to develop a logic-based framework for \emph{activity reasoning}. Notably, this goes beyond plain recognition and involves various other 
practically relevant tasks like prediction, explanation, and meta-reasoning tasks like equivalence and entailment tests.
In particular, we are interested in answering the following questions:
\begin{itemize}
	\item[(Q1)] What activity is currently pursued? Why is it pursued?
	\item[(Q2)] Has the activity been completed? Which goals are relevant to complete it?
	\item[(Q3)] Is it possible to complete the activity? What prevents it from being completed?
	\item[(Q4)] What are potential next goals in order to complete the activity?
	\item[(Q5)] Who needs to cooperate to complete the activity?
	\item[(Q6)] Are two activities equivalent? Does one subsume the other?
\end{itemize}

As the nature of complex human activities is a wide and rather open-ended theme,
we  focus in this work on higher-level aspects of activities like logical and temporal dependencies between objects at different levels of the activity.
We want to express statements about the \emph{achievement of hierarchically structured goals} that
are needed to complete an activity. To this end, we introduce a new temporal logic that 
allows to express sub-goal relations and both logical and temporal dependencies. 
Effectively, this logic is a weaker version of common linear-time temporal logic (LTL)~\cite{Emerson1990} but avoids its high computational costs.
Our idea is strikingly simple: 
 we consider classical propositional logic but interpret formulas over sequences of atoms instead of sets. The syntax is extended by a single modal operator to
exploit this additional order information in the model. Although conceptually  straightforward, this idea has not been studied before and leads to a logic that is sufficiently expressive for our purpose while its computational properties are comparably benign:  The complexity of decision procedures drops from PSPACE for LTL~\cite{Sistla1985} to \NP or even P in our logic. 

We discuss how this logic serves as basis to describe activities as hierarchically structured systems that involve subjects and their goals, mediating tools in both  a material (real object) as well as in an immaterial (\egc skills) sense, and other aspects from activity theory like  the division of labour between different subjects.
We also introduce various reasoning problems that all can be expressed in the new temporal logic.
We thus provide a first formal logic-based definition of complex-human activity from a high-level of abstraction that is directly inspired from activity theory.

We finally use  
Answer-Set Programming (ASP)~\cite{Eiter2009,baral03,Gebser2012}, a prominent approach to declarative problem solving, 
as a KR tool  to realise activity reasoning. 
We illustrate how to specify activity models in a declarative way so that various reasoning tasks, like 
activity recognition, explanation, prediction, and equivalence tests can be solved in a uniform way.
The advantage of a uniform problem encoding is that the activity model can be conveniently specified by facts while the other parts
of the encoding are fixed. Notably, such an encoding is only possible because of the comparatively low computational complexity of our 
new temporal logic.

There is plenty of research on plan and intend recognition that  uses already logic-based frameworks  to
reason about goals and intends of agents that bears similarities to this work.
However, plan and intend recognition are designed to solve a particular reasoning task which is, as the name suggests, recognition. 
As mentioned earlier, we are interested in a broader range of inference modalities, cf.~questions (Q1) to (Q6). 

When it comes to representing goal hierarchies, related work on hierarchical planning~\cite{Gabaldon2009,Blount2012} and Hierarchical Task Network (HTN) planning~\cite{Georgievski2014} come into mind. 
So how do we justify to introduce a new simple logic that may even seem impoverished compared to such highly expressive planning languages?
%
One reason is  that planning frameworks do not provide an elaborate notion of complex human activity. They  
typically focus on the effects of 
actions on states of the world and regards activities  as simple, or sometimes hierarchically structured, sequences of actions to achieve certain goals. 
This intuitive understanding is quite limited in comparison to the notion of human activity from activity theory.
In fact, activities sometimes seem irrational if analysed at the level of directing goals only. The classical example is the beater that scares away animals. The actions are directed at a goal that seems conflicting with the beater's motive of hunting down an animal unless we now that there are other hunters who wait in ambush and share a common motive. The notion of a motive is central in activity theory and not present in planning frameworks.   
 
The other big difference is that hierarchical planing systems provide highly expressive modelling languages.
This is justified as the motivation comes from on how agents and robots should perform tasks in complex environments.
The flip side of this expressiveness are high computational costs.  
While classical planning is highly intractable as even the simplest variants are PSPACE-complete, HTNs are even harder with a complexity anywhere between EXPTIME and undecidable~\cite{Alford2014,Erol1996}. 
Our motivation is however different: We want to model human beings. This indeed justifies a quite high level of abstraction since we do not need to lay out detailed plans for them, and we can
indeed go the opposite direction and reduce both conceptual and computational complexity. This allows for practical reasoning methods, \egc our uniform ASP encoding, that
would not have been feasible otherwise.   
Finally, HTN planners are designed to compute plans while we are interested in different reasoning tasks that also include meta reasoning like
the equivalence of activity models. Due to the lower complexity of our approach, we can achieve this  elegantly using ASP.

Our main contributions can be summarised as follows:
\begin{enumerate}[\rm(i)]
	\item 
	We introduce {\bf a novel temporal logic}. It is conceptually simple but still sufficiently expressive. Modelling is easy as it is close to familiar propositional logic, and  its complexity is not worse than that of classical propositional logic. 
	This is a contribution to temporal logics for itself with potentially  applications beyond the ones discussed here.
	
	\item 
	We introduce {\bf a first formal definition of complex human activity that is inspired directly by socio-cultural activity theory}. It is based on our new logic and thus comes with a clear syntax and semantics.
	 This definition is the basis for various reasoning tasks beyond plain activity recognition. 
	
	\item We introduce {\bf a uniform ASP encoding to practically realise activity reasoning} 
	that allows to solve different tasks in a flexible way. 
	Although the involved problems are hard, ASP solvers have stood the test of time for tackling similar KR problems in the past quite successfully~\cite{Erdem2016}. 
	
\end{enumerate}

This article is organised as follows: 
We first review  activity theory in Section~\ref{sec:background}. Then, we discuss  our motives for a new temporal logic  in Section~\ref{sec:logic}, and introduce its syntax and semantics. We then show how to formally describe activity systems in Section~\ref{sec:formalisation} and how various activity reasoning tasks can be expressed in our logic. In Section~\ref{sec:asp}, we introduce a uniform ASP encoding for activity systems and illustrate how to solve  different inference problems.
Finally, related work is discussed in Section~\ref{sec:relwork}, and we conclude in Section~\ref{sec:conclusion}.

\section{Background on Activity Theory}\label{sec:background}

\begin{figure}
	\centering \small
	\begin{tikzpicture}
	
	\node  (Mediation) at (3,3) { Mediating Artifacts};
	\node  (Subject)   at (1,1) { Subject};
	\node  (Object)    at (5,1) { Object};
	
	\draw [<->,thick] (Mediation) to (Subject);
	\draw [<->,thick] (Mediation) to (Object);
	\draw [<->,thick] (Subject) to (Object);

	\end{tikzpicture}
	\caption{Activity as mediated subject-object relation.
		\label{fig:triangle}
	}
\end{figure}
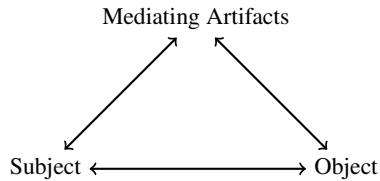

Historically speaking, activity theory is the result of a shift in paradigm among Soviet psychologists in the first half of the 20th century. 
This change has been towards an understanding of the human mind as a result of the interaction between an individual (the subject) and the world (the object) in order to 
satisfy the individual's needs. 
This so-called socio-cultural perspective regards activity as a fundamental unit of life that transforms both the subject and the object and is always mediated through artifacts.  
Activity theory is not a predictive theory but an analytic concept that can act as a lens to study certain problems. Since a focus on the human being in relation to mediating technology  is inherent in activity theory, it
soon became an important foundation in the field of Human Computer Interaction~\cite{Kaptelinin2006c}
and is, in general, used interdisciplinary in fields beyond psychology.  
Activity is understood as 
relation between a subject (the human being) and an object (a goal that the subject wants to achieve) that
is mediated by tools and situated within a social context. This relation is illustrated in Fig.~\ref{fig:triangle}. 
Our understanding of activity is based on the work of Aleksei Leontiev~\cite{Leontiev1978} who originally introduced activity theory as a conceptual framework. 
We follow Kaptelinin and Nardi~\cite{Kaptelinin2006c} in the presentation of the following principles: 

\paragraph{Object-orientedness}
Activity is understood as the process that relates a subject and an object. 
Hence, the subject is always acting towards some object and activity is defined by this relation that transforms both the subject and the object.
For us, subject will refer to an individual although in general this
could refer to a group or organisation as well. The object is something that exists in the world. It is what the subject wants to obtain, the goal of the activity. 
Objects are not restricted to the physical world, they can  be intangible  as well as long as their existence is objective.
For example, in the activity of learning for a medicine exam, knowledge about the human body could be the object that has to be obtained.    

\paragraph{Hierarchical structure}
Activities can be decomposed into different layers. The top-layer object of an activity is its motive. The motive  is the ultimate goal that drives the activity, it
is why the activity takes place and is closely related to the subject's needs. Then, there are a goals subordinated to the  motive that need to be achieved in order to complete the activity and that direct the subject's actions. These goals 
are hierarchically structured themselves and  
consist of subgoals at different levels. The subject's actions are consciously directed towards the goals. The third level  accommodates operations, which describe how actions are implemented by lower-level routines under certain conditions. 
The distinction between motivating and directing objects is a feature of human activity  that  is in particular important when considering activities in a social context that involves \emph{division of labour}. It is quite common that overall motivating objects are shared in a group while individuals act towards different subordinated directing goals.      

\paragraph{Mediation}
Another key concept in activity theory is that a subject never acts directly towards an object. The process is always mediated by artifacts like
material or immaterial tools. 
Often, such mediating tools encode knowledge and experience of previous generations that is passed down from one generation to another. Any activity is shaped by the tools that are used and, vice versa, activities shape
the tools they involve. 
Mediating means also include acquired skills or knowledge which can be viewed as internalised tools.       
Generally speaking, internalisation and externalisation refer to a transfer from parts of an activity from the external world to within the subject and and the other way round. 

\smallskip
Other aspects of activities, like the  role of rules and norms as well as the  dialectic approach of activity theory for development, are beyond the scope of this brief introduction. Also note that we do not deal with organisations which are more prominent in the Scandinavian brand of activity theory\cite{Engestroem1987}. From a computer-science perspective, Leontiev’s approach, which is predominantly concerned with activities of individuals, was more appealing.
To sum up, complex human activity is understood as the process that relates an individual with some objective goal. This goal, dubbed the motive, defines the activity and generates a hierarchy of goals that direct the actions of the individual. We want to provide a formal definition of activity that takes these ideas into account. In particular, we will present a logic-based approach to represent and reason with goal hierarchies that are linked to tools and motives of individuals.

\section{A Logic for Abstract Activity Reasoning}\label{sec:logic}

Our goal is to develop a formalism to specify complex activities that support various reasoning tasks that are relevant to implement intelligent systems with the human being in the loop. 
We focus on rather high-level aspects of activities and present a simple propositional temporal logic for this purpose.
Before introducing syntax and semantics, 
we discuss our motivation  for a new logic. 

\subsection{Motivation}\label{sec:motivation}

We will use the 
traditional Swedish activity of having \emph{fika}---a coffee break and social institution---as a running example. 
Fika involves having a coffee and eating a cinnamon bun or a sandwich. 
Moreover, we assume that we need to open the fridge and get a plate before we can eat a sandwich. Likewise, 
we need to open the cabinet and get a plate before we can eat a cinnamon bun.

The activity of having fika can be analysed at different levels according to activity theory. 
Approaches that use machine learning or probabilistic methods based on sensor data will often focus on the operational level 
as they observe the effects of operations of an individual. 
Thus, they recognise actions and their associated goals rather than activities according to our understanding.
Although this kind of aggregation and interpretation of data is an important intermediate step, it often fails
to recognise the bigger picture. For example, we could learn that a person is heating water from sensor data. However, the more meaningful  broader activity of preparing a soup (or defrosting a lock in winter, etc.) could be missed. Also, since there is no explicit model of the activity, it is not easy to generate explanations for observations or predictions in data-driven approaches.

Planning frameworks, on the other hand, are well suited to describe activities at the level of actions and goals.  
But as goals are the effects of (sequences of) actions, such frameworks are inherently unable to make the distinction between motivating objects, \iec motives, and
directing goals like in the example from the introduction. 
We are interested in expressing dependencies between objects of the activity but abstract away the actual low-level actions to achieve them.

For example,  a decomposition of the fika activity into goals and subgoals
appears in Fig.~\ref{fig:and-or-graph} as and-or graph. 
That two arcs are connected by an arc means that all involved subgoals need to be obtained first. If there is no arc, just one of the subgoals is required to achieve the goal. 
\begin{figure}
	\centering
	\begin{tikzpicture}
	
	\node (fika)        at (3,4)  {fika};
	\node (coffee)      at (1,3)   {coffee};
	\node (sandwich)    at (7,2)   {sandwich};
	\node (cinnamonbun) at (4,2)   {cinnamon\_bun};
	\node (cabinet)     at (3,1)   {cabinet};
	\node (plate)       at (6,1)   {plate};
	\node (fridge)      at (8,1)   {fridge};
	\draw [-latex] (fika)    to (coffee);
	\draw (fika)    to (5,3);
	\draw [-latex](5,3) to (sandwich);
	\draw [-latex](5,3) to (cinnamonbun);
	\draw [-latex](cinnamonbun) to (cabinet);
	\draw [-latex](cinnamonbun) to (plate);
	\draw [-latex](sandwich)    to (plate);
	\draw [-latex](sandwich)    to (fridge); 
	\path (fika) -- coordinate[near start] (A) (coffee);
	\path (fika) -- coordinate[near start] (B) (5,3);
	\draw (A) to[bend right] (B);
	
	\path (cinnamonbun) -- coordinate[near start] (C) (cabinet);
	\path (cinnamonbun) -- coordinate[near start] (D) (plate);
	\draw (C) to[bend right] (D);
	
	\path (sandwich) -- coordinate[near start] (E) (plate);
	\path (sandwich) -- coordinate[near start] (F) (fridge);
	\draw (E) to[bend right] (F);
	
	\end{tikzpicture}
	\caption{Goals and subgoals in the fika example.
		\label{fig:and-or-graph}
	}
\end{figure}
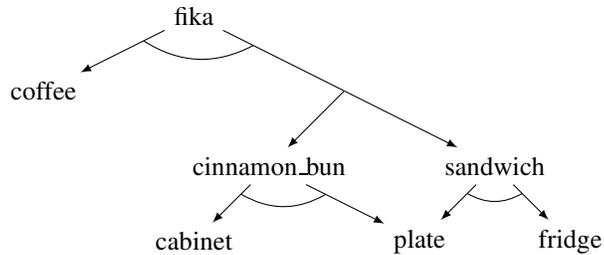

\paragraph{Design considerations}
We  identify a couple of design requirements for specifying  goal structures as in Fig.~\ref{fig:and-or-graph}:     
There are logical relations like disjunctions and conjunctions of subgoals. Subgoals can also be understood as  implications: a goal can be obtained only if the subgoal has been achieved. However, being a subgoal  also carries temporal information: the subgoal has to precede the goal in time. There will, in general, be further logical and temporal dependencies between goals that need to be taken into account.

Another important design consideration are the reasoning task that we want to support. This is a crucial point: we are not interested in mere activity recognition but in 
\emph{activity reasoning} which encompassed a variety of inference problems that are distilled from practical requirements. In particular, if we assume that we obtain information about goals that have been achieved or that ar being pursued by methods that are situated at the operational level or the level of actions, 
we would like to be able to answer the following questions (reiterated from the introduction):
\begin{itemize}
	\item[(Q1)] What activity is currently pursued? Why is it pursued?
	\item[(Q2)] Has the activity been completed? Which goals are relevant to complete it?
	\item[(Q3)] Is it possible to complete the activity? What prevents it from being completed?
	\item[(Q4)] What are potential next goals in order to complete the activity?
	\item[(Q5)] Who needs to cooperate to complete the activity?
	\item[(Q6)] Are two activities equivalent? Does one subsume the other?
\end{itemize} 


Given all the design considerations, temporal logics seem to be a natural choice as they  allow
to reduce various inference problems to logical entailment in a uniform way. However, common temporal logics like LTL~\cite{Pnueli1977}  are costly: 
deciding satisfiability is PSPACE-complete for the prominent LTL versions~\cite{Sistla1985,Lichtenstein1985}. 
We introduce a new  propositional temporal logic, dubbed logic \Lang, that basically extends classical propositional logic by a single modal operator.
This logic is simple yet sufficiently expressive and has comparably good computational properties.

\paragraph{Atomic goal atoms}
The atomic unit in \Lang are \emph{goal atoms} which are either true (the goal has been achieved) or false (a goal has not been achieved).
This is  a more abstract perspective on goals than in, \egc action theories where they are seen as certain desirable states of the  world and are described by sets of fluents. 
A goal atom expresses that a given goal has been achieved without implying anything else about the state of the world. 
For example, ``empty glass on table'' can be seen as a goal atom which is achieved if someone puts an empty glass on the table.
Later, this glass  could be filled with water or put away. Still,  the goal atom remains true
, even if the glass is neither empty nor on the table as it expresses an achievement and not a state of the world.

\paragraph{Persistence  and the Heraclitus principle}
Any goal is achieved at a particular time instant,  and we  thus
assume an instant-based model of time. Although, for example, drinking a cup of coffee  is also a process of a certain time span, and such processes may overlap with others, the associated
goals are reached only at specific points in time, \egc at the end of a process.
Since we understand a goal as something that once achieved cannot be taken away, a distinctive feature of our logic is that
goals persist: once a goal atom becomes true, it remains true forever. 
Directly related to this persistence of goal achievements is the principle that a goal cannot be obtained twice.  
We follow here the Heraclitus principle stating that:
``No man ever steps in the same river twice, for it is not the same river and he is not the same man''.
Well, but what if one decides to have one more cup of coffee? Then, it will not be the same cup of coffee  and thus a new goal that is distinguishable from the first one.
In practice, this often means no real limitation: Suppose we want to have, e.g., two cups of coffee instead of one. We can simply introduce a second distinguishable goal ($\atom{coffee}_2$ besides $\atom{coffee}_1$).

\subsection{Syntax of Logic \BLang}
We consider an instance of a propositional linear-time logic with the following past operations~\cite{Emerson1990}:

$\begin{array}{ll}
\always\phi & \mbox{it has always been the case that\ } \phi, \\  
\past\phi & \mbox{it has at some time been the case that\ } \phi\mbox{, and} \\
\phi\before\psi & \phi \mbox{ has been the case before\ } \psi.    
\end{array}$

\medskip
\noindent
Temporal  logic \Lang  is formally defined as follows:

\begin{definition}\label{def:syntax}
	Let \U be a countable set of goal atoms, 
	language \Lang  over  \U  is the smallest set such that
	\begin{enumerate}[\rm(i)]
		\item for any goal atom $a \in \U$, $a \in \Lang$;
		\item if $\phi \in \Lang$, then $(\neg \phi) \in \Lang$;
		\item if $\phi,\psi \in \Lang$, then $(\phi \land \psi) \in \Lang$; and 
		\item if $\phi \in \Lang$, then $(\always \phi) \in \Lang$.
	\end{enumerate}
\end{definition}
As usual, we might omit parentheses a long as no ambiguities arise, and we will often leave \U implicit.
We also use the following abbreviations:
\[
\begin{array}{rcl@{\quad}rcl}
\phi \lor \psi& := &   \neg(\neg \phi \land \neg \psi)\,, &  
\phi \limpl \psi& := & \neg \phi \lor \psi\,, \\
\past \phi& := &\neg \always \neg \phi\,, &
\phi \before \psi& := & \psi \limpl \past \phi\,. 
\end{array}
\]

We assume the following precedence order among the connectives:
\[
\always,\past, \before;  \qquad \neg; \qquad \land,\lor; \qquad \limpl.
\]


\subsection{Semantics of Logic \BLang}

We start by  establishing some notation.
We use $\seq{a_1,\ldots,a_n}$ to denote a finite {sequence} of elements, while sets are written as $\set{a_1,\ldots,a_n}$.
For two sequences $P$ and $Q$, $PQ$ is the concatenation of $P$ and $Q$.
We say that $P$ is a prefix of $Q$ 
and write $P \prefix Q$ 
iff $Q = PR$, for some sequence $R$.   
The empty sequence is denoted by $\varepsilon$. Note that the notion of prefix is reflexive.

In classical propositional logic, a model is commonly represented by the set of atoms that are true in that model.
In \Lang, we will use sequences of atoms instead of sets. 
\begin{definition}\label{def:model}
	An \emph{ordered model} over \U is a sequence of atoms from \U without repetition. 
\end{definition}

Intuitively, an ordered model represents in which order certain goals are reached.
The fact that we do not allow for repetition reflects the Heraclitus principle.
For example,
\[
\seq{\atom{cup}, \atom{coffee}, \atom{sandwich}}
\] 
describes the scenario where we first get a cup, then we drink a coffee, and finally we eat a sandwich.
As discussed above, the process of eating a sandwich and drinking a coffee can be overlapping,  but reaching the goal of drinking a cup of coffee and eating a sandwich are
regarding as specific points in time where one has to precede the other.

In fact, this ordered model is a compact representation of a sequence of possible worlds, each of which can be
represented as a classical model, that describe how the world evolves from one state to another:
\[
\emptyset, \{\atom{cup}\}, \{\atom{cup}, \atom{coffee} \}, \{\atom{cup}, \atom{coffee}, \atom{sandwich} \}\,. 
\]

\begin{definition}\label{def:models}
	Given an ordered model $M$ and a formula $\phi \in \Lang$ over \U, we write
	$M \models \phi$ to denote that $\phi$ is \emph{true} in $M$.
	The truth of a formula in \Lang is inductively defined as follows:
	\begin{enumerate}[\rm(i)]
		\item $M \models a$ iff $a$ occurs in $M$, for an atom $a \in \U$;
		\item $M \models \neg \psi$ iff $M \not\models \psi$;
		\item $M \models \psi \land \phi$ iff $M \models \psi$ and $M \models \phi$;
		\item $M \models \always \phi$ iff $M' \models \phi$ for each prefix $M'$ of $M$;
	\end{enumerate}	
\end{definition}
The concepts of of theory, entailment, equivalence, etc. are defined as in classical logic.	
As the notion of prefix is reflexive, $\always \phi$ means that $\phi$ is true now and at any time in the past.
Likewise,  $\past \phi$ means that $\phi$ is true now or at some time in the past.

We are able to give a formal specification of the fika example from Fig.~\ref{fig:and-or-graph}:
\[
\begin{array}{l}
\always((\atom{coffee} \land (\atom{cinnamon\_bun} \lor \atom{sandwich})) \before \atom{fika}) \\
\always((\atom{fridge} \land \atom{plate}) \before \atom{sandwich}) \\
\always((\atom{cabinet} \land \atom{plate}) \before \atom{cinamon\_bun})
\end{array}
\]
An ordered model that satisfies all theses formulas would be \[
\seq{\atom{plate},\atom{fridge},\atom{coffee}, \atom{sandwich}, \atom{fika}}\,.\]
Note that $<$ is usually combined with $\always$: compare
$\seq{a,b,c} \models b \before a$ with $\seq{a,b,c} \not\models \always(b \before a)$. 

Recall that goal atoms have the property that once they become true, they remain true. This concept generalises to formulas as follows.
\begin{definition}\label{def:goalformulas}
	A formula $\phi$ has \emph{positive persistence} iff $M \models \phi$ implies $M' \models \phi$, for any ordered models $M$ and $M'$ with $M \prefix M'$. 
	Conversely, $\phi$ has \emph{negative persistence} iff $M \not\models \phi$ implies $M' \not\models \phi$, for any ordered models $M$ and $M'$ with $M \prefix M'$.
	
	If a formula has positive persistence, we refer to it as a \emph{goal formula}.
\end{definition}
It can be checked that $(\atom{cinnamon\_bun} \lor \atom{sandwich})$ from the fika example is indeed a goal formula.

Note that for any formula $\phi$, $\always \phi$ has negative persistence. Also,  the negation of a formula with positive, \resp negative, persistence, has
always negative, \resp positive persistence. This implies that any formula of form $\past \phi$ is a goal formula.

As usual, we say that an occurrence of a subformula is positive, \resp negative, if it is in the scope of an even, \resp odd, number of negation 
symbols. We proceed with a further syntactic characterisation of goal formulas.
\begin{theorem}\label{thm:goalformulas}
	Let $\phi$ be a formula,  
	\begin{enumerate}[\rm(i)]
		\item if the polarity of each atomic subformula of $\phi$ is positive, then $\phi$ is a goal formula;
		\item if the polarity of each atomic subformula of $\phi$ is negative,  then $\phi$ has negative persistence.
	\end{enumerate}
\end{theorem}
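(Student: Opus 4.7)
The plan is a simultaneous structural induction on the formula $\phi$, proving (i) and (ii) together so that the negation case can flip between the two statements. Both clauses are established by the same case analysis on the outermost connective, matching Definition~\ref{def:syntax}: atoms, $\neg$, $\land$, and $\always$. The only nontrivial observation I would record at the outset is the standard fact that an atom's polarity under $\neg$ flips, while $\land$ and $\always$ preserve the polarity of their immediate subformulas; hence ``all atomic occurrences are positive (resp.\ negative)'' propagates cleanly along the induction.

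Next I would dispatch the base and easy cases. For an atom $a$, positive persistence is immediate from clause (i) of Definition~\ref{def:models}: if $a$ occurs in $M$ and $M \prefix M'$, then $a$ occurs in $M'$. The negation case is where the two statements exchange: if every atom of $\neg\psi$ has positive polarity, then every atom of $\psi$ is negative, so by the IH for (ii) $\psi$ has negative persistence, and an $M \prefix M'$ argument on $M \not\models \psi$ gives positive persistence of $\neg\psi$; the dual direction is symmetric. For $\psi \land \chi$, polarity is inherited by each conjunct, and both positive and negative persistence lift from the conjuncts to the conjunction by a direct application of the IH.

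The main obstacle, and really the only delicate case, is $\always \psi$ under positive polarity. Suppose all atoms of $\always\psi$ are positive (hence all atoms of $\psi$ are positive), and assume $M \models \always\psi$ and $M \prefix M'$. I need to show $N \models \psi$ for every prefix $N$ of $M'$. The key combinatorial observation is that since $M \prefix M'$, any prefix $N$ of $M'$ is comparable to $M$: either $N \prefix M$ or $M \prefix N$. In the first case, $N \models \psi$ follows from $M \models \always\psi$. In the second case, $M$ is itself a prefix of $M$, so $M \models \psi$, and the IH gives positive persistence of $\psi$, whence $M \prefix N$ yields $N \models \psi$. This is exactly the step where the inductive strengthening pays off, because without positive persistence of $\psi$ we could not extend truth beyond $M$. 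For the negative-polarity counterpart of $\always \psi$, the argument is easier and does not even require the IH: if $M \not\models \always\psi$, then some prefix $N$ of $M$ falsifies $\psi$, and since $N \prefix M \prefix M'$, $N$ is a prefix of $M'$ as well, so $M' \not\models \always\psi$.

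Finally, I would remark that the abbreviations $\lor$, $\limpl$, $\past$, and $\before$ do not need separate cases, since Definition~\ref{def:syntax} takes $\neg$, $\land$, and $\always$ as primitive, and the abbreviations inherit the polarity-of-atoms condition through their definitional expansions. This closes the induction and establishes both (i) and (ii).
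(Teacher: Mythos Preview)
Your proof is correct and follows the same structural-induction strategy as the paper, with the same case split over atoms, $\neg$, $\land$, and $\always$ and the same flip between (i) and (ii) at the negation case. Your treatment of the $\always$ case is in fact more explicit than the paper's: where the paper simply asserts that $\always\psi$ inherits positive persistence from $\psi$ and cites the earlier observation that $\always\psi$ always has negative persistence, you spell out the prefix-comparability argument for (i) and re-derive the negative-persistence observation for (ii).
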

\begin{proof}
	By structural induction on $\phi$. If $\phi$ is atomic, it is obviously a goal formula. 
	If $\phi = \psi' \land \psi''$ and (i) applies, $\psi'$ and $\psi''$ are both goal formulas by induction, and thus
	$\phi$ is a goal formula as well. If (ii) applies,  $\psi'$ and $\psi''$ have negative persistence and so does $\psi' \land \psi''$.
	If $\phi = \always \psi$, we only need to consider Case~(i) as negative persistence is implied for any formula of this form. As $\psi$ is a goal formula by induction,  so is  $\always \psi$. 
	Finally, assume $\phi = \neg \psi$. Without loss of generality, assume that Case~(i) applies. But then Case~(ii) holds for $\psi$ and it is by induction a formula of negative persistence. As the negation of a formula of negative persistence yields a formula of positive persistence, $\neg \psi$ has to be a goal formula. 
\end{proof}

We next formalise the idea of goal hierarchies and define a sub-goal relation.
Intuitively, a subgoal is characterised by two properties: first,  it is necessary to achieve the higher-level goal, and second, it is achieved first.
\begin{definition}\label{def:subgoal}
	Let $T$ be a set of formulas. Formula $\psi$ is a \emph{subgoal} of formula $\phi$ in $T$ iff
	$T \models \always(\psi \before \phi)$. 
\end{definition}

Note that the fika example is a collection of explicit subgoal definitions.
An example of an implicit subgoal relation is that $\atom{coffee} \land \atom{plate}$ is a subgoal of $\atom{fika}$.
In case the considered formulas are goal formulas, the characterisation of subgoals simplifies:
\begin{theorem}\label{thm:subgoals}
	For formula $\phi$ and goal formula $\psi$, 
	$\always(\psi \before \phi)$ is equivalent to $\always(\phi \limpl \psi)$.
\end{theorem}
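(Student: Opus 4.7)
The plan is to unfold the definition of $\before$ and then reduce the equivalence to a single lemma about goal formulas, namely that $\past\psi$ collapses to $\psi$ whenever $\psi$ is a goal formula.

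First I would rewrite using the abbreviations from the paper: by definition $\psi \before \phi$ stands for $\phi \limpl \past \psi$, so the statement becomes $\always(\phi \limpl \past \psi) \equiv \always(\phi \limpl \psi)$. It is therefore enough to establish the auxiliary fact that for any ordered model $M$ and any goal formula $\psi$, $M \models \past \psi$ iff $M \models \psi$. Since $\always$ and $\limpl$ treat their arguments semantically, replacing $\past \psi$ by $\psi$ inside $\always(\phi \limpl \cdot)$ preserves truth in every model, giving the desired equivalence.

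The core of the argument is thus the lemma $\past \psi \equiv \psi$ on goal formulas. The direction $M \models \psi \Rightarrow M \models \past \psi$ is immediate from reflexivity of the prefix relation: $\past \psi$ unfolds to $\neg \always \neg \psi$, which holds iff some prefix $M'$ of $M$ satisfies $\psi$, and $M$ itself is such a prefix. For the converse, suppose $M \models \past \psi$; then there is a prefix $M' \prefix M$ with $M' \models \psi$. Because $\psi$ is a goal formula, it has positive persistence by Definition~\ref{def:goalformulas}, so $M' \models \psi$ and $M' \prefix M$ together yield $M \models \psi$.

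There is no real obstacle here, since both halves of the auxiliary lemma are one-line consequences of reflexivity of $\prefix$ and positive persistence of $\psi$. The only point to be careful about is that the persistence hypothesis is used only on $\psi$ and not on $\phi$: the theorem places no restriction on $\phi$, and indeed the reduction goes through for arbitrary $\phi$ because $\phi$ appears only on the left of the implication inside a single $\always$, which is evaluated pointwise on prefixes and is untouched by the substitution of $\past\psi$ for $\psi$.
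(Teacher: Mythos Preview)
Your proposal is correct and follows essentially the same approach as the paper: unfold $\psi \before \phi$ as $\phi \limpl \past\psi$, establish that $\past\psi$ and $\psi$ are equivalent when $\psi$ is a goal formula (one direction by reflexivity of $\prefix$, the other by positive persistence), and then substitute inside $\always(\phi \limpl \cdot)$. The paper's proof is the same argument, stated more tersely.
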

\begin{proof}
	Recall that $\always(\psi \before \phi)$ is defined as ($\ast$) $\always(\phi \limpl \past \psi)$.
	As $\psi$ entails $\past \psi$ for any formula $\psi$, and
	$\past \psi$ entails $\psi$ for any goal formula $\psi$, $\past \psi$  can be replaced by  $\psi$ in ($\ast$) and the equivalence follows.
\end{proof}
As all subgoal relations in the fika example are defined between goal formulas, we could apply the previous result and replace
all occurrences of $\before$ by implications.

Ordered models prove to be robust against irrelevant information in their representation:
\begin{theorem}\label{thm:langrep}
	Let $M$ be an ordered model, $\phi \in \Lang$, and $M'$ the 
	subsequence of $M$ that consists of  precisely the  atoms in $M$ that occur in $\phi$.
	Then, $M \models \phi$ iff $M' \models \phi$. 
\end{theorem}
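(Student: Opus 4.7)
The plan is to prove the theorem by structural induction on $\phi$. The main subtlety is that a naive induction fails at compound connectives: for $\phi = \psi \land \chi$, the subsequence of $M$ restricted to the atoms occurring in $\phi$ generally contains atoms that do not occur in $\psi$, so the induction hypothesis applied to $\psi$ yields a different subsequence than we need. To make induction go through, I would first strengthen the statement as follows: for every set $A \subseteq \U$ containing all atoms occurring in $\phi$, letting $M_A$ denote the subsequence of $M$ obtained by keeping exactly those atoms lying in $A$, we have $M \models \phi$ iff $M_A \models \phi$. The theorem is then the special case where $A$ is the set of atoms occurring in $\phi$.

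The base case $\phi = a$ is immediate since $a \in A$, so $a$ occurs in $M$ iff it occurs in $M_A$. The cases $\phi = \neg \psi$ and $\phi = \psi \land \chi$ follow directly from the induction hypothesis, since the atoms of each subformula are also contained in $A$ (recall that $\lor$, $\limpl$, $\past$, and $\before$ are abbreviations defined from $\neg$, $\land$, and $\always$, so only these three connectives need to be handled).

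The interesting case, which I expect to be the main obstacle, is $\phi = \always \psi$. Here I need a short combinatorial correspondence between prefixes of $M$ and prefixes of $M_A$. Specifically: (a) for every prefix $N$ of $M$, its restriction $N_A$ is a prefix of $M_A$, because the restriction operation preserves relative order and therefore prefixes; and (b) conversely, every prefix $N'$ of $M_A$ arises as $N_A$ for some prefix $N$ of $M$, obtained by extending $N'$ in $M$ up to the position of its last atom (taking $N = \varepsilon$ if $N' = \varepsilon$). With this correspondence in hand, and noting that the atoms of $\psi$ are contained in $A$, the induction hypothesis yields $N \models \psi$ iff $N_A \models \psi$ for every prefix $N$ of $M$. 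Combining this with (a) and (b), every prefix of $M$ satisfies $\psi$ iff every prefix of $M_A$ satisfies $\psi$, that is, $M \models \always \psi$ iff $M_A \models \always \psi$. This closes the induction and hence establishes the theorem.
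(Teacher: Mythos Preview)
Your proof is correct and follows the same structural-induction approach as the paper, including the same prefix-correspondence argument for the $\always$ case. Your explicit strengthening of the induction hypothesis (allowing $A$ to be any superset of the atoms of $\phi$) is a clean way to handle the conjunction case that the paper treats more tersely, implicitly relying on a double application of the unstrengthened hypothesis via the identity $(M')|_{\mathrm{at}(\psi)} = M|_{\mathrm{at}(\psi)}$.
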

\begin{proof}
	The proof proceeds by structural induction on $\phi$. 
	%
	If $\phi$ is an atom, then 
	$M \models \phi$ iff $M' \models \phi$ as $\phi$ occurs in $M$ iff $\phi$ occurs in $M'$.
	
	
	Assume $\phi = \neg \phi'$.  
	By our induction hypothesis, $M \models \phi'$ iff $M' \models \phi'$. Thus $M \models \neg \phi'$ iff $M' \models \neg \phi'$.   	
	
	Assume  $\phi = \phi' \land \phi''$.
	Then, $M \models  \phi' \land \phi''$ iff $M \models  \phi'$ and $M \models  \phi''$.
	By induction, $M \models \phi'$ iff $M' \models \phi'$, and $M \models \phi''$ iff $M' \models \phi'$.
	Hence,  $M \models  \phi' \land \phi''$ iff $M' \models \phi' \land \phi''$.
	
	Assume $\phi = \always \phi'$, and $M' \models \always \phi'$.
	Hence, $N' \models \phi$, for any prefix $N'$ of $M'$.
	Let $N$ be an arbitrary prefix of $M$. The subsequence of $N$ that contains all atoms from $N$ that also occur in $\phi$
	is a subsequence of $M'$.    
	The induction hypothesis implies  $N \models \phi$.
	Therefore, $M \models \always \phi$.
	Conversely, assume $M  \models \always \phi'$. It follows that $N \models \phi'$, for any prefix $N$ of $M$.
	For an arbitrary such prefix $N$,
	let $N'$ be the prefix of $M'$ that contains all atoms of $N$ that also occur in $\phi'$.
	By induction, $N \models \phi'$ only if $N' \models \phi'$, 
	and $M' \models \always \phi$ follows. 
\end{proof}
Note that the above theorem would fail if our logic contained or allowed to express a previous 
time operator that is common in many other temporal logics.

Language \Lang is in fact a weaker but for our purposes sufficiently expressive version of LTL~\cite{Pnueli1977} 
with only past operators~\cite{Lichtenstein1985}. 
Models of linear time logics are usually infinite sequence of states and deciding satisfiability is PSPACE-complete for
the prominent LTL versions~\cite{Sistla1985,Lichtenstein1985}. 
Due to the compact representation of ordered models in \Lang, complexity of reasoning tasks considerably drops, which was a big motivation to introduce a dedicated new logic in the first place. 

\begin{lemma}\label{lem:poly}
	Deciding whether a formula $\phi \in \Lang$ is satisfiable by an ordered model $M$ is feasible in polynomial time.
\end{lemma}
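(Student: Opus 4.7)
The plan is to give a dynamic-programming model-checking algorithm and bound its running time by a polynomial in $|\phi| + |M|$. Let $\Phi$ denote the set of subformulas of $\phi$ and let $P_0 = \varepsilon, P_1, \ldots, P_n = M$ enumerate the prefixes of $M$, so that $n = |M|$. I would fill in a two-dimensional table $T[\psi, i] \in \{0,1\}$ that is meant to encode the truth value of $P_i \models \psi$, in order of increasing size of $\psi$; the answer to model checking is then read off as $T[\phi, n]$. Since $|\Phi| \le |\phi|$ and there are $n+1$ prefixes, the table has $O(|\phi| \cdot |M|)$ cells.

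The nontrivial step is to show that each cell can be filled in constant (or amortised constant) time. For an atomic subformula $a$, I would scan $M$ once up front and record $\mathit{first}(a) \in \{1, \ldots, n, \infty\}$, the position of the first occurrence of $a$ in $M$; then $T[a, i] = 1$ iff $\mathit{first}(a) \le i$, obtained in $O(1)$ per cell. The Boolean cases $T[\neg \psi, i] = 1 - T[\psi, i]$ and $T[\psi \land \psi', i] = T[\psi, i] \cdot T[\psi', i]$ are immediate from Definition~\ref{def:models}. For the modal case, the key observation is the recurrence
\[
T[\always \psi, i] \;=\; T[\always \psi, i-1] \,\wedge\, T[\psi, i], \qquad T[\always \psi, 0] = T[\psi, 0],
\]
which follows from clause (iv) of Definition~\ref{def:models}, because the prefixes of $P_i$ are precisely the prefixes of $P_{i-1}$ together with $P_i$ itself (using that $\prefix$ is reflexive). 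A single sweep through $i = 0, \ldots, n$ thus fills the entire $\always \psi$-column in $O(|M|)$ time.

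Putting the pieces together, the total running time is $O(|\phi| \cdot |M|)$, which is polynomial in the size of the input, and correctness is a straightforward structural induction on $\psi$ matching the clauses of Definition~\ref{def:models}. The main obstacle, such as it is, is simply pinning down the correct recurrence for the $\always$-case and checking the base case against the empty prefix; nothing deeper is required, since $\Lang$ contains no genuine future operator and no previous-time operator (compare the remark following Theorem~\ref{thm:langrep}), so the truth of a subformula on a prefix depends only on shorter prefixes and smaller subformulas already recorded in the table.
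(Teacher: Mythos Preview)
Your proof is correct and follows essentially the same approach as the paper: both construct a two-dimensional table indexed by subformulas of $\phi$ and prefixes of $M$, filled bottom-up according to Definition~\ref{def:models}. Your version is in fact more explicit, spelling out the $O(1)$ recurrence for the $\always$-case and thereby obtaining the sharper bound $O(|\phi|\cdot|M|)$, whereas the paper simply asserts the table is constructible in polynomial time.
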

\begin{proof}
	We construct a table, where the rows are labeled with all the subformulas of $\phi$, and the columns are labeled by all prefixes of $M$.  This table is then filled row by row with truth values that correspond to the truth of the respective subformula in the considered prefix. We start with atoms and proceed from less complex formulas to more complex ones.
	The truth values of formulas that are not atomic is determined by inspecting the truth values of their subformulas for the prefixes prescribed by Definition~\ref{def:models}. Formula $\phi$ is satisfiable iff the truth value in the row for $\phi$ and column for $M$ is true. 
	This table is constructible in polynomial time.
	We illustrate this construction in Table~\ref{tab:sat}.
\end{proof}

This  result is in stark contrast to the \NP-hardness of determining truth of a propositional LTL formula in a 
finite structure even  when restricted to the single temporal operator for ``eventually''~\cite{Sistla1985}.

\begin{lemma}\label{lem:hardness}
	Deciding whether a formula $\phi \in \Lang$ is satisfiable is \NP-hard.
\end{lemma}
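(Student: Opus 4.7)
The plan is to reduce propositional satisfiability (SAT) to satisfiability in $\Lang$. Since $\Lang$ is built from goal atoms, negation, conjunction, and the temporal operator $\always$, any purely propositional formula $\phi$ (using only atoms, $\neg$, $\land$, and their derived connectives $\lor$, $\limpl$) is already syntactically a formula of $\Lang$. So the reduction will be the identity map, and all work goes into showing that the semantics agree.

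First I would establish the following correspondence between classical truth assignments and ordered models. Given a propositional formula $\phi$ over atoms $\U_\phi$ and a classical assignment $v \colon \U_\phi \to \{\text{true},\text{false}\}$, let $M_v$ be any ordered model consisting of exactly the atoms $a \in \U_\phi$ with $v(a) = \text{true}$ (in any order, without repetition, which is well defined since each atom appears at most once). By Definition~\ref{def:models}(i), $M_v \models a$ iff $a$ occurs in $M_v$ iff $v(a) = \text{true}$. A routine structural induction using Definition~\ref{def:models}(ii) and (iii) then shows $M_v \models \phi$ iff $v$ classically satisfies $\phi$. Conversely, any ordered model $M$ of $\phi$ induces an assignment $v_M(a) := \text{true}$ iff $a$ occurs in $M$, and the same induction gives $v_M \models \phi$ classically.

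Combining both directions, a propositional formula is classically satisfiable iff it is satisfiable in $\Lang$. Since the reduction is trivially polynomial (it is the identity on the input) and SAT is \NP-hard, it follows that deciding satisfiability in $\Lang$ is \NP-hard.

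I do not expect any real obstacle: the only thing to verify carefully is that the ordered-model semantics on the purely propositional fragment is insensitive to the order of the sequence and to the presence of atoms outside $\U_\phi$, both of which follow immediately from Theorem~\ref{thm:langrep} applied to $\phi$ (which ensures that only the occurring atoms matter). This also confirms that the particular ordering chosen for $M_v$ is irrelevant.
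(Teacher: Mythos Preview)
Your proposal is correct and follows essentially the same approach as the paper: the identity reduction from propositional SAT, together with the observation that a classical model $M$ satisfies $\phi$ iff any sequencing $M'$ of its true atoms satisfies $\phi$ in $\Lang$. The paper's proof is simply a terser version of what you wrote, omitting the explicit structural induction and the converse direction that you spell out.
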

\begin{proof}
	We reduce  
	the satisfiability problem  in classical propositional logic to the satisfiability problem in \Lang.
	Let $M$ be a set of atoms and $\phi$ a formula of classical logic.
	Then, $M$ satisfies $\phi$ in classical logic iff  $M' \models \phi$ in  \Lang, where $M'$ is any sequence of the atoms in $M$.
\end{proof}

\begin{theorem}\label{thm:lang}
	Deciding whether a formula $\phi \in \Lang$ is satisfiable is \NP-complete.
\end{theorem}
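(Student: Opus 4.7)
The plan is to combine the two previous lemmas and establish a small-model property that yields NP membership; NP-hardness is already given by Lemma~\ref{lem:hardness}, so the real work lies in showing that satisfiability is in NP.

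For NP membership, I would argue that if $\phi \in \Lang$ is satisfiable at all, then it is satisfiable by an ordered model whose size is polynomial in $|\phi|$. The key tool here is Theorem~\ref{thm:langrep}: given any ordered model $M$ with $M \models \phi$, let $M'$ be the subsequence of $M$ consisting precisely of those atoms of $M$ that occur in $\phi$; Theorem~\ref{thm:langrep} guarantees $M' \models \phi$. Since ordered models contain no repetitions, $|M'|$ is at most the number of distinct atoms occurring in $\phi$, which is at most $|\phi|$. This establishes the small-model property.

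The NP algorithm is then the natural guess-and-check: nondeterministically guess a sequence $M'$ of distinct atoms from those occurring in $\phi$ (of length bounded by $|\phi|$), and then invoke Lemma~\ref{lem:poly} to decide in polynomial time whether $M' \models \phi$. The guessing step uses polynomially many bits, the verification step is polynomial-time, so satisfiability is in NP. Combined with Lemma~\ref{lem:hardness}, NP-completeness follows.

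I do not anticipate any serious obstacle, since Theorem~\ref{thm:langrep} and Lemmas~\ref{lem:poly} and~\ref{lem:hardness} essentially pre-package every ingredient needed; the only care required is the short justification that the witness model $M'$ has length bounded by $|\phi|$ (because the Heraclitus principle baked into Definition~\ref{def:model} forbids repetitions, so the length of any ordered model is bounded by the number of atoms occurring in the formula it is meant to satisfy).
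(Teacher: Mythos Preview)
Your proposal is correct and follows essentially the same approach as the paper: the paper's proof also guesses an ordered model over the atoms occurring in $\phi$ (justified by Theorem~\ref{thm:langrep}), checks it in polynomial time via Lemma~\ref{lem:poly}, and invokes Lemma~\ref{lem:hardness} for hardness. Your write-up is simply a slightly more detailed unpacking of the small-model argument (making explicit the bound $|M'| \le |\phi|$ from the no-repetition condition), but the structure and ingredients are identical.
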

\begin{proof}
	We can decide if a  formula $\phi$ is satisfiable by nondeterministically guessing an ordered model $M$ over the atoms in $\phi$ (Theorem~\ref{thm:langrep}) and checking in polynomial time (Lemma~\ref{lem:poly}) whether $M \models \phi$. 
	Lemma~\ref{lem:hardness} shows hardness.
\end{proof}	

\begin{table}[t]
	\centering
	\caption{Deciding if $\seq{a,b,c} \models (a \lor b) \before c$ holds.}
	\label{tab:sat} 
	
	\begin{tabular}{l@{\qquad}c@{\qquad}c@{\qquad}c@{\qquad}c@{\qquad}}
		& $\varepsilon$ & $\seq{a}$ & $\seq{a,b}$ & $\seq{a,b,c}$ \\ \hline
		$a$   						& f & t & t & t \\
		$b$   						& f & f & t & t \\
		$c$   						& f & f & f & t \\
		$a \lor b$					& f & t & t & t \\
		$(a \lor b) \before c$ 		& t & t & t & t \\
		
	\end{tabular}	
	
\end{table}

\section{Formalising Activity Systems}\label{sec:formalisation}

In this section, we discuss how to represent and reason about complex human activities and
relate our formalisation to the activity theory concepts from Section~\ref{sec:background}.
We hence provide a first formal logic-based definition of an activity system that is inspired directly from activity theory. We then
study the different  inference problems from Section~\ref{sec:motivation}.


\subsection{Activity Systems}

The following definition makes it precise how to use logic \Lang
to describe what we call an activity system.  

\begin{definition}\label{def:activity} 
	An \emph{activity system} is a tuple $(S,O,M,A)$, where
	$S$, $O$, and $M$, are countable sets, the \emph{subjects}, \emph{objects}, and \emph{mediating artifacts} of the activity system, respectively, 
	and $A$ is a set of activities. An \emph{activity} is a  
	triple $(m,G,T)$ with $\U = S \times O$, and
	\begin{enumerate}[\rm(i)]
		\item $m \in \U$ is the \emph{motive of the activity};
		\item $G \subseteq \Lang$ is a set of \Lang formulas over \U; and 
		\item $T: \U \mapsto 2^M$ maps  goal atoms to sets over $M$.  
	\end{enumerate}
\end{definition}

Since activity is understood as a mediated subject-object relation, Definition~\ref{def:activity} involves three sorts of entities: 
the \emph{subjects} $S$, the \emph{objects} $O$, and mediating artifacts $M$.
All goal atoms are subject-object pairs from   $S \times O$ which reflects the concept of object-orientedness from activity theory.

The motive $m$ is the top-level object of the activity. It describes \emph{why} the activity takes place and is, according to the 
hierarchical structure of activities, a motivating object that corresponds to {needs} of an individual.
The set $G$ of formulas describes which goals have to be obtained to complete the activity associated with the motive. These directing goals are at a lower level in the hierarchical structure. 
Similar to the motive, the directing goals are subject-object pairs that describe \emph{what} needs to be achieved by which individuals. 
The directing goals of an activity may involve different subjects which is a convenient way to model  division of labour and social context.

Finally, function $T$ defines sets of mediating tools that are required for an individual to obtain an object. The idea is that for each subject-object pair $(s,o) \in G$, at least one set in $T(P)$ of tools is required to be available for $s$ when acting towards $o$. Tools may be material and immaterial ones, and also skills are seen as internalised artifacts that mediate an activity. Mapping $T$ can be specified concisely by associating each goal atom $a \in G$ with a formula $\phi$ of classical logic over $M$ such that $T(a)$ is given by the models that satisfy $\phi$. 
An activity system is simply understood as a collection of activities with common subjects, objects, and tools.

Let us illustrate a simple activity system that consists of only one activity based on the fika example.
The motive $m$ of this activity is $(ebba,fika)$ which can be interpreted as Ebba wants to have fika. This goal in turn corresponds to needs of Ebba like social engagement, motivates the activity, and explains why  actions take place.
The goal structure $G$ for $m$ is specified by the following formulas:
\[
\begin{array}{l}
\atom{(ebba,coffee)} \land  (\atom{(ebba,cinnamon\_bun)}  \lor \atom{(ebba,sandwich)})\,, \\ 
\always((\atom{(elsa,fridge)} \land \atom{(ebba,plate)})  \before \atom{(ebba,sandwich)})\,, \\
\always((\atom{(elsa,cabinet)} \land \atom{(ebba,plate)})  \before \atom{(ebba,cinamon\_bun)})\,.
\end{array}
\]

Note that some of the above goals need to be achieved by Ebba's helping assistant Elsa instead of Ebba. In this activity system, we assume that Ebba is mildly impaired and is unable to do certain tasks like going to the fridge or opening the cabinet due to, \egc mobility issues. 
Some goals are predicated upon mediating artifacts. For example, the process of archiving $g = (\atom{ebba},\atom{coffee})$ requires the use of a coffee machine as a physical tool, skills to operate the machine, and is mediated through social rules how to drink the coffee: $T(g) = \set{\set{\atom{c\_machine},\atom{c\_skills},\atom{fika\_etiquette}}}$.

\subsection{Reasoning Problems}
Activity reasoning involves, besides the representation of activities, reasoning tasks that are distilled from
practical requirements like recognition, explanation, and prediction  as discussed Section~\ref{sec:motivation}.
We next discuss these inference tasks in more detail.

\paragraph{Completion}
The most basic task is to ask which activities have been completed. 
In more formal terms, given an ordered model $I$, an activity system
$\mathcal{A} = (S,O,M,A)$, and a motive $m$ from an activity in $A$, \emph{completion} is the problem of deciding whether
$I \models G$ for some activity $a \in A$ with $a = (m,G,T)$. 

We say that an atom $p$ is \emph{relevant} to complete activity $a$, if $p$ occurs in $I$ and $I' \not\models G$, where $I'$ is the subsequence of $I$ that does not contain $p$.

\paragraph{Achievability}
Relevant for predicting activities and reasoning about limitations of available tools is to determine which activities can 
be completed in the future.
Let $I$ be an ordered model, 
$t: \U \mapsto M$ a specification of the currently available tools for subject-object pairs,
$\mathcal{A} = (S,O,M,A)$ an activity system, and $m$ a motive from an activity in $A$. 
\emph{Achievability} is the problem of deciding if there exists an ordered model $I'$ with $I \prefix I'$ , such that
$I' \models  G$ for some activity $a \in A$ with $a = (m,G,T)$, and for each goal atom $g$ in $I'$, for some element $e \in T(g)$, $e \subseteq t(g)$.
While completion can be decided in polynomial time, cf.~Lemma~\ref{lem:hardness}, achievability is a  harder problem:
\begin{theorem}\label{thm:real}
	Deciding achievability is NP-complete.
\end{theorem}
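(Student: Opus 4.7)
The plan is to establish the two directions of NP-completeness separately: membership by a guess-and-check argument, and hardness by reduction from $\Lang$-satisfiability.

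For NP-membership, I would first appeal to Theorem~\ref{thm:langrep} to argue that any model witnessing achievability can be taken to use only atoms that occur in $I$ or in some formula of $G$; equivalently, we may throw away any surplus atoms in a larger candidate. This keeps the candidate extension $I'$ of polynomial size. The nondeterministic algorithm then guesses (i) an ordered extension $I'$ of $I$ over this polynomially bounded atom pool, and (ii) for each goal atom $g$ appearing in $I'$, a candidate witness set $e_g$ intended to realise the tool condition. Verification consists of three polynomial-time checks: that $I \prefix I'$ by direct inspection, that $I' \models G$ using Lemma~\ref{lem:poly} formula-by-formula, and that $e_g \in T(g)$ together with $e_g \subseteq t(g)$ for each goal atom $g$ occurring in $I'$.

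For NP-hardness, I would reduce from $\Lang$-satisfiability, which is NP-hard by Lemma~\ref{lem:hardness}. Given $\phi \in \Lang$ over atoms $X$, construct an activity system with $S = \{s\}$, $O = X$ (so that $\U = S \times O$ is in bijection with $X$), and $M = \emptyset$, which forces $t(g) = \emptyset$ and $T(g) = \{\emptyset\}$ for every $g$, trivialising the tool constraint. Let $\phi'$ be $\phi$ rewritten over $\U$, pick any motive $m \in \U$, take $A = \{(m,\{\phi'\},T)\}$, and let the initial model be $I = \varepsilon$. Then achievability for this instance holds iff some ordered model satisfies $\phi'$, which is exactly $\Lang$-satisfiability of $\phi$.

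The main obstacle I anticipate is the tool-verification step. Definition~\ref{def:activity} allows $T$ to be represented succinctly (the paper later mentions encoding $T(g)$ via a classical formula over $M$), in which case naively checking ``does there exist $e \in T(g)$ with $e \subseteq t(g)$'' could itself be NP-hard and ruin a straightforward polynomial verifier. The clean workaround is to fold the witnesses $e_g$ into the nondeterministic guess, reducing the verification for each $g$ to a membership-in-$T(g)$ test plus a subset inclusion, both of which are routine given any reasonable representation of $T$. With the tools collapsed in the hardness reduction and the witnesses guessed in the membership direction, the argument goes through without interference between the logical and the tool-theoretic parts of achievability.
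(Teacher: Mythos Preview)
Your proposal is correct and follows essentially the same strategy as the paper: guess a polynomially bounded extension $I'$ and verify, then reduce from $\Lang$-satisfiability for hardness. Your version is in fact more careful than the paper's sketch---you explicitly invoke Theorem~\ref{thm:langrep} and Lemma~\ref{lem:poly}, include the atoms of $I$ in the guessing pool, and you flag and resolve the tool-verification subtlety by folding the witnesses $e_g$ into the guess, a point the paper glosses over entirely.
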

\begin{proof}
	We nondeterministically guess ordered model $I'$ over the language of atoms that occur in $G$ from any activity $a \in A$ and check the remaining 
	conditions of achievability in polynomial time which shows membership.    
	Hardness follows from a simple reduction from the satisfiability problem of formulas in \Lang, cf.~Theorem~\ref{thm:lang}, to achievability.
\end{proof}

Completion and achievability are relevant to recognise finished and ongoing activities as well as to yield \emph{explanations} by inferring potential motives. 
Also, the task of \emph{predicting} potential next goals towards the completion of an activity can be formulated as a achievability problem:
Assume motive $m$ is realisable in the activity system $\mathcal{A}$ with respect to the ordered model $I$. We can decide if $g \in \U$ is a potential 
next goal by deciding achievability for the ordered model $I\seq{g}$.
To identify \emph{abnormalities}, models can be investigated in a  backward fashion as well: Assume we know that motive $m$ drives an activity in $\mathcal{A}$ but
$m$ is not realisable in the current ordered model $I$. The longest prefix of $I$ that yields achievability of $m$, if such a prefix exists, will point
to the subsequent goal in $I$ that prohibits completion of $m$.
A simple variation of the achievability problem is \emph{social achievability} which additionally specifies a set $J$ of subjects. A motive is deemed realisable under this notion if motive $m$ is realisable in an ordered model that only involves subjects from $J$. This can be used, for example, to check if Ebba can complete her fika activity without assistance after the goal $(\atom{elsa},\atom{fridge})$ has been achieved.  
In a similar way, questions about required  mediating artifacts to complete activities can be addressed by abductive reasoning from $t$.

\paragraph{Entailment and equivalence of activities}
An important meta-reasoning task is to determine if one activity is entailed by another activity or if two activities are 
equivalent. This is relevant to simplify the representation of activity systems by getting rid of redundant conditions. Fortunately, 
as \Lang is a monotonic logic, these tasks can be realised in a straight-forward manner by directly using the logic's entailment operator.
For activity system $\mathcal{A}$ with activities $a_1 = (m_1,G_1, T_1)$ and $a_2 = (m_2,G_2, T_2)$, $a_1$ entails $a_2$ iff $G_1 \models \phi$, for each
formula $\phi \in G_2$. We say that $a_1$ and $a_2$ are equivalent iff  $a_1$ entails $a_2$ and $a_2$ entails $a_1$.
In the fika example,  the simpler formula 
$
\always(\atom{(ebba,sandwich)} \limpl  
(\atom{(elsa,fridge)} \land \atom{(ebba,plate)}))
$
is entailed by the activity that is formalised in the fika example (cf.~Theorem~\ref{thm:subgoals}).

\section{ASP for Activity Reasoning}\label{sec:asp}

An important facet of activity reasoning is how to practically realise it.  
Our goal is not to discuss an implementation for a particular task. Rather, 
we want to describe activity systems 
so that they function as a 
knowledge base for KR systems and support  
various inference problems in a uniform way
Our tool of choice is \emph{Answer-Set Programming} (ASP).

ASP is a prominent approach for declarative problem solving with numerous applications in AI and KR related areas~\cite{Erdem2016}. 
The roots of ASP lie in logic-based knowledge representation, nonmonotonic reasoning, and logic programming 
based on the 
stable model semantics~\cite{gelfond88,gelfond91}. 
The success is due to simple, yet expressive, modelling languages along with efficient solvers like 
Clingo, DLV, or WASP.\footnote{\url{https://potassco.org}, \url{http://www.dlvsystem.com}, \url{http://alviano.github.io/wasp}}
The idea of ASP as a problem solving paradigm is to declaratively specify a problem, called the \emph{ASP program}, and to use an ASP solver to compute solutions, dubbed the \emph{answer sets} of the 
program. We refer the reader to related work~\cite{baral03,Eiter2009,Gebser2012} for a proper introduction and provide only intuitive explanations in the remainder of this section. 

\subsection{Fact Representation of Activities}
\begin{figure*}
	\centering	
	{\small
		\begin{verbatim}
		% formulas associated with the motive (ebba,fika)
		formula((ebba,fika), and((ebba,coffee),
		     or((ebba,sandwich),(ebba,cinamon_bun)))).
		formula((ebba,fika), h(before(
		     and((elsa,fridge),(ebba,plate)),(ebba,sandwich)))).
		formula((ebba,fika), h(before(
		     and((elsa,cabinet),(ebba,plate)),(ebba,cinamon_bun)))).
		\end{verbatim}
	}
	\caption{Fact representation of the fika example.}\label{fig:facts}
\end{figure*}
We first illustrate how activity systems can be modelled in the input language of the solver Clingo. A \emph{fact representation} of the fika example
appears in Fig.~\ref{fig:facts}.
Predicate \verb|formula(M,F)| specifies that motive \verb|M| is associated with formula \verb|M|.
Formulas are written in prefix notation using term symbols \verb|and|, \verb|or|, \verb|impl|, and \verb|neg| for the classical connectives. 
For the temporal operators $\always$, $\past$, \resp $\before$, we use 
\verb|h|, \verb|p|, \resp \verb|before|. Subjects, objects, directing as well as motivating goals are implicit. The aspect of mediating artifacts is left out for the sake of a more clear and succinct presentation.
It is also possible to specify prefixes $\seq{a_1,\ldots,a_n}$ of ordered models as part of this input using facts 
\verb|prefix(|$1$\verb|,|$a_1$\verb|)|, \ldots, \verb|prefix(|$n$\verb|,|$a_n$\verb|)|. We will discuss applications later on.
The fact representation of \Lang formulas and prefixes can be regarded as the input for the ASP solver.

\subsection{Uniform Problem Encoding}
The next part of the ASP model, given in Fig.~\ref{fig:guess},  are rules that span the search space in terms of ordered models. 
In ASP terminology, this is the \emph{guessing part}. While the fact representation differs for different activities,
the guessing part
is fixed. 
\begin{figure}
	{\small
		\begin{tabular}{l}
			\verb|% subformula relation for formulas|  \\	
			\verb|subformula(S) :- formula(_,S).|  \\
			\verb|atom((S,O))   :- subformula((S,O)).| \\
			\verb|subformula(S) :- subformula(|$U$\verb|(S)).|  \\
			\verb|subformula(S) :- subformula(|$B$\verb|(S,_)).|  \\
			\verb|subformula(S) :- subformula(|$B$\verb|(_,S)).|  \\
			~~~for $U \in \{\verb|neg|, \verb|h|, \verb|p| \}$  and  $B \in \{ \verb|and|, \verb|or|, \verb|impl|, \verb|before| \}$ \\ 
		\end{tabular}
		
		\begin{verbatim}
		% guess length of the ordered model
		minlen(N) :- #count{ A: prefix(P,A),atom(A) } = N.
		maxlen(M) :- #count{ X: atom(X) } = M.
		1 {length(X): X =  N..M} 1 :- minlen(N), maxlen(M).
		
		% guess an ordered model of that length
		model(N,A) :- prefix(N,A), atom(A).
		1 {model(P,A): atom(A),not prefix(A,_)} 1 :- 
		  P = (M+1)..N, minlength(M), length(N).
		:- model(P,A), model(P1,A), P < P1.
		\end{verbatim}
	}
	\caption{Guessing ordered models.}\label{fig:guess}
\end{figure}

We use predicate \verb|models/2| to represent ordered models:
any ordered model  $\seq{a_1,\ldots,a_n}$ corresponds to the set of atoms
\verb|model(|$1$\verb|,|$a_1$\verb|)|, \ldots, \verb|model(|$n$\verb|,|$a_n$\verb|)| and vice versa.
An ASP solver, \egc Clingo, can readily be used to produce all ordered models 
(or all models that contain a specified prefix) over the language of input formulas. 
This  alone is of course not very useful. We need to relate ordered models with truth values of the inputs formulas.

The rules in Fig.~\ref{fig:define} define the truth values for each subformula in the ordered model generated by the guessing part. 
Again, this part is fixed for any input.
We can run Clingo on the entire encoding to inspect the ordered models  together with the truth values of all subformulas in these models. 
\begin{figure}
	{\small
		\begin{verbatim}
		% prefix indices of the ordered model
		index(N) :- length(M), N = 0 .. M. 
		
		% truth values of formulas  
		t(S) :- formula(_,S), t(S,N), length(N).
		f(S) :- formula(_,S), not t(S).
		
		% truth value of atoms
		t(A,N) :- atom(A), model(N1,A), N1 = 1..N, index(N).  
		f(A,N) :- atom(A), not t(A,N), index(N). 
		
		% truth value of not
		t(F,N) :- subformula(F),F=neg(F1),f(F1,N).
		f(F,N) :- subformula(F),F=neg(F1),t(F1,N).
		
		% truth value of or
		t(F,N) :- subformula(F),F=or(F1,F2),t(F1,N).
		t(F,N) :- subformula(F),F=or(F1,F2),t(F2,N).
		f(F,N) :- subformula(F),F=or(F1,F2),f(F1,N), f(F2,N).
		
		% truth value of and
		t(F,N) :- subformula(F),F=and(F1,F2),t(F1,N),t(F2,N).
		f(F,N) :- subformula(F),F=and(F1,F2),f(F1,N).
		f(F,N) :- subformula(F),F=and(F1,F2),f(F2,N).
		
		% truth value of impl
		t(F,N) :- subformula(F),F=impl(F1,F2),f(F1,N).
		t(F,N) :- subformula(F),F=impl(F1,F2),f(F2,N).
		f(F,N) :- subformula(F),F=impl(F1,F2),t(F1,N),f(F2,N).
		
		% before(F,G) 
		t(F,N):-subformula(F),F=before(F1,F2),index(N),t(F1,N1), N1<=N.
		t(F,N):-subformula(F),F=before(F1,F2),index(N),f(F2,N).
		f(F,N):-subformula(F),F=before(F1,F2),index(N),not t(F,N).
		
		% p(F) 
		t(F,N) :- subformula(F), F = p(F1), index(N), t(F1,N1), N1 <= N.
		f(F,N) :- subformula(F), F = p(F1), index(N), not t(F,N).
		
		% h(F) 
		f(F,N) :- subformula(F), F = h(F1), index(N), f(F1,N1), N1 <= N.
		t(F,N) :- subformula(F), F = h(F1), index(N), not f(F,N).
		\end{verbatim}
	}
	\caption{Truth values of subformulas in an ordered model.}\label{fig:define}
\end{figure}

The intuition that the ASP encodings of Fig.~\ref{fig:guess} and \ref{fig:define} characterise the semantics of logic \Lang can be formalised as follows.
\begin{theorem}\label{thm:aspencoding}
	Given a set $G$ of \Lang formulas over a set $A$ of atoms and an ordered model $P$ over $A$.
	The answer sets of  \[\mathit{P} = \mathit{Guess} \cup  \mathit{Define} \cup \mathit{Formulas} \cup \mathit{Prefix}\] are in one-to-one correspondence
	with the ordered models of $G$ that have $P$ as prefix, where
	\begin{itemize}[--]
		\item $\mathit{Guess}$ are the rules from Fig.~\ref{fig:guess},
		\item $\mathit{Define}$ are the rules from  Fig.~\ref{fig:define},
		\item $\mathit{Formulas}$ is the fact representation of formulas $G$, and
		\item $\mathit{Prefix}$ is the fact representation of prefix $P$.	
	\end{itemize}
	
	Furthermore, let $S$ be an answer set of $\mathit{P}$ that represents an ordered model $M$, then
	$S$ contains atom $\mathtt{t(F)}$, if the formula that corresponds to term $\mathtt{F}$ is true in $M$, and  $\mathtt{f(F)}$ otherwise. 
\end{theorem}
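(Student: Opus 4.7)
The plan is to prove the two claims independently: first, that the answer sets of $\mathit{P}$ are in one-to-one correspondence with the ordered models of $G$ extending $P$ as prefix, and second, that the truth atoms $\mathtt{t(F)}$ and $\mathtt{f(F)}$ in each such answer set faithfully represent the semantics of $\mathcal{L}$ according to Definition~\ref{def:models}.

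First I would analyse the $\mathit{Guess}$ part in isolation (together with $\mathit{Formulas}$ and $\mathit{Prefix}$). The rules defining $\mathtt{atom}/1$ and $\mathtt{subformula}/1$ clearly extract exactly the atoms and subformulas occurring in $G$. The rules for $\mathtt{minlen}$, $\mathtt{maxlen}$ and $\mathtt{length}$ deterministically bound the admissible length of an ordered model between the length of $P$ and the total number of atoms in $G$. The choice rule for $\mathtt{length}$ and the choice rule placing one atom at each position beyond the prefix generate, together with the constraint $\mathtt{:- model(P,A),\, model(P1,A),\, P < P1}$ enforcing no repetition, exactly the ordered models (over the language of $G$) that have $P$ as a prefix. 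A straightforward bijection can be set up: an answer-set restriction to $\mathtt{model}/2$ atoms corresponds to the sequence obtained by reading off positions $1,\ldots,N$.

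Next I would handle the $\mathit{Define}$ part by structural induction on the term representation of the subformulas. The base case is atoms: the rule $\mathtt{t(A,N) :- atom(A),\, model(N1,A),\, N1 = 1..N}$ asserts $\mathtt{t(A,N)}$ iff $A$ occurs in the prefix of length $N$, which is precisely clause~(i) of Definition~\ref{def:models}; $\mathtt{f(A,N)}$ is completed by negation as failure. The Boolean cases (\texttt{neg}, \texttt{and}, \texttt{or}, \texttt{impl}) are immediate from the propositional truth tables together with the induction hypothesis, and their rules can be seen to be stratified so that $\mathtt{t}$ and $\mathtt{f}$ are defined consistently. For the temporal connectives $\mathtt{h}$, $\mathtt{p}$, and $\mathtt{before}$, I would verify that the existential quantification $\mathtt{N1 <= N}$ in the body exactly mirrors the prefix quantification in Definition~\ref{def:models}, using Theorem~\ref{thm:langrep} implicitly to observe that the indices $\mathtt{N1}$ range over the indices of all prefixes. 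The top-level projection rule $\mathtt{t(S) :- formula(\_,S),\, t(S,N),\, length(N)}$ then reads off truth of a formula of $G$ at the final length, which is its truth in the ordered model.

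The main obstacle I expect is the stratification and well-definedness of the $\mathtt{t}/\mathtt{f}$ pair under the answer-set semantics: although the encoding uses negation as failure in the defining clauses for $\mathtt{f}/2$ and for $\mathtt{before}$, the dependency graph on $(\mathtt{t},\mathtt{f})$ atoms is stratified by the size of the subformula (atoms at the bottom, compound formulas built strictly above their subformulas) and, within a fixed subformula, by prefix length. Establishing this stratification rigorously lets me argue that the least model of the reduct is unique on the $\mathtt{t}/\mathtt{f}$ atoms given any guess of $\mathtt{model}/2$, and hence the bijection between guesses and answer sets is preserved. Combining the two parts yields the one-to-one correspondence, and the final clause of the theorem about the meaning of $\mathtt{t(F)}$ and $\mathtt{f(F)}$ follows directly from the induction above.
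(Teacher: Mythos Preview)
Your plan follows essentially the same route as the paper's proof sketch: analyse \textit{Guess} to establish the bijection between answer sets and ordered models extending $P$, then verify that \textit{Define} computes the semantics of Definition~\ref{def:models} bottom-up from atoms to compound formulas (the paper phrases this as mirroring the membership argument of Lemma~\ref{lem:poly}). Your explicit stratification argument is a welcome addition that the paper's sketch leaves implicit; the appeal to Theorem~\ref{thm:langrep} for the temporal clauses is, however, misplaced---that result justifies the \texttt{maxlen} bound, not the range of prefix indices, which is already covered by \texttt{index/1}.
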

\begin{proof}[Proof (Sketch)]
	The role of the rules in Fig.~\ref{fig:guess} is to generate
	answer sets that corresponds to ordered models over $A$
	with $P$ as prefix.
	Predicate \verb|subformula/1| provides an  auxiliary definition that is used later on: atom \verb|subformula(F)| is derived iff the formula corresponding to $F$ is a subformula of any of the input formulas. Predicate \verb|atom/1| is used to denote atomic subformulas. 
	
	The minimal length of an ordered model, represented by \verb|minlen/1|, is the number of atoms in the prefix $P$ (0 if no such prefix is specified as input).
	The maximal length, given by \verb|maxlen/1|, is the number of atoms in the inputs formulas.
	The actual length of the model, given as  \verb|length/1|, is nondeterministically selected to be a number between the minimal and the maximal length.
	
	The ordered model is represented by atoms 
	\verb|model(N,A)|, meaning that atom \verb|A| occurs at position \verb|N| of the model, where \verb|N| ranges from $1$ to the length of the model. 
	For any position in the sequence of the considered length that is not already determined by the prefix $P$, one atom from the input formulas is nondeterministically selected. Furthermore, a constraint ensures that different positions of the model cannot hold the same atom.

	Atom \verb|t(F,N)| is derived iff the subformula associated with $F$ is true under the prefix of length $N$ of the ordered model that is defined by \verb|model/2|. Likewise, \verb|f(F,N)| is derived if that formula is false.
	For every subformula and any length between 0 (the empty prefix) and the length of the entire ordered model, either a respective atom \verb|t/2| or \verb|f/2| is derived.
	We hence follow closely the membership argument in the proof of Lemma~\ref{lem:poly} and work in a bottom-up fashion from the atoms to more complex subformulas.  
	The actual rules that derive this truth values follow the semantics of the logical and temporal operators from Definition~\ref{def:syntax}.
	
	Finally, \verb|t(F)| is derived iff \verb|t(F,N)| is true, \verb|F| is an input formula, and \verb|N| is the length of the entire ordered model. Atom  \verb|f(F)| is true iff \verb|t(F)| is not true.
\end{proof}

\subsection{ASP for Activity Reasoning}

The problem encoding serves as a basis to realise different reasoning tasks. 
We illustrate how to solve them in ASP using our running example.

\paragraph{Completion}
To infer which activities have been completed in a given ordered model, we can 
specify a model using predicates \verb|prefix| and \verb|length| as additional input facts and run the ASP solver with the added rules:
{\small
	\begin{verbatim}
	-completed(M) :- formula(M,F), f(F).
	completed(M) :- formula(M,_), 
	not -completed(M).  
	\end{verbatim}} 
\noindent
The solver will return a single answer set that contains atom \verb|completed(M)| for any motive \verb|M| that has all its formulas satisfied by the specified ordered model.
If we specify an ordered model as follows:
{\small \begin{verbatim} 
	length(3).
	prefix(1,(elsa,fridge)).
	prefix(2,(ebba,plate)).
	prefix(3,(ebba,sandwich)).
	\end{verbatim}}
\noindent
we will get one answer set that does not contain any atom \verb|completed/1| as the goal $(\atom{ebba},\atom{coffee})$ is missing to satisfy the fika model.

\paragraph{(Social) achievability and prediction}
We also can partially specify an ordered model and check if an activity can be completed in an extension of that  model, thus
solving a problem of achievability.
To test if the motivating goal $\atom{(ebba,fika)}$ can be realised after the goals $\atom{(ebba,coffee)}$ and  $\atom{(elsa,fridge)}$ have been achieved, we could
add
{\small\begin{verbatim}
	prefix(1,(ebba,coffee)). 
	prefix(2,(elsa,fridge)).
	:- -completed((ebba,fika)).
	\end{verbatim}} 
\noindent
to the ASP program. The last constraint prunes away all answer sets that correspond to ordered models which do not satisfy all formulas associated with the motive.

Possible next goals to realise $\atom{(ebba,fika)}$  can be inferred by computing the brave consequences (atoms true in all answer sets, Clingo option \verb|-e brave|) of the program which reveals that either $\atom{(ebba,plate)}$ or $\atom{(elsa,cabinet)}$ need to become true next. This corresponds to prediction problems from the previous section.

Tasks of social achievability can be solved in a similar way. For instance, we can add the constraints
{\small\begin{verbatim}
	:- -completed((ebba,fika)).
	:- true((S,O)), S <> ebba. 
	\end{verbatim}}  
\noindent
to  test if Ebba can have fika without assistance, which is not the case (Clingo will output no answer sets). 

\paragraph{Relevance and minimal ordered models}
We might find it irritating that ordered models will also contain atoms that we deem irrelevant for realising an activity. A remedy is to stipulate that models are of minimal length
which can easily expressed with an optimisation statement
{\small\begin{verbatim}
	#minimize { N,N: length(N) }.
	\end{verbatim}}  
\noindent
that adds a penalty of $N$ to any model  of length $N$.
Now inspecting the answer sets shows that in a minimal ordered model, $\atom{(ebba,plate)}$ is the only possible next goal as $\atom{(elsa,cabinet)}$ is not
necessary to complete the activity.

\paragraph{Entailment and equivalence}
The meta-reasoning task of deciding whether an activity entails another one or if two activities are equivalent can be expressed in ASP quite neatly.
Remember that the  before operator can  be replaced by an implication according to Theorem~\ref{thm:subgoals}.
Hence, we would expect that the input formulas entail, for example, the formula 
$\always(\atom{sandwich} \limpl  \atom{fridge})$. 
To verify this, we add the following rules to the encoding:
{\small\begin{verbatim}
	formula((ebba',fika'),
	h(impl((ebba,sandwich),(elsa,fridge)))).
	
	fail :-  completed((ebba,fika)), 
	-completed((elsa',fika')).
	:- not fail.
	\end{verbatim}}  
Atom \verb|fail| will be derived iff the solver finds an counterexample to the entailment problem, 
that is, an ordered model that satisfies all input formulas but fails to satisfy the new formula. The last rule eliminates all
answer sets that do not correspond to a counterexample. Hence, an ASP solver will return no answer set as entailment holds.
As equivalence is defined as the conjunction of two entailment problems, we can decide it  by two separate  calls.

\section{Related Work}\label{sec:relwork}

Interest in human activities from an KR perspective has a long history and early work on activity representations based on scripts dates back to the 1970s~\cite{Schank1977}. 
This approach, further discussed in  Winograd and Flores classic book Understanding Computers and Cognition~\cite{Winograd1987}, also supports a variety of reasoning tasks but was not pursued within a logical paradigm.

There is already a considerable body of literature when it comes to recognising activtities. 
We refer the reader to related work for an overview and snapshot of  the field of plan, activity, and intent recognition~\cite{Sukthankar2014a}.
Some work uses machine learning to analyse sensor data and thus operates more on the operational level. 
We discussed the limitations of such a data-driven approach when it comes to higher-level reasoning tasks and explanations in Section~\ref{sec:logic}.
Other higher-level goal recognition approaches use probabilistic or statistical methods and are thus different from this work
that is based on qualitative reasoning exclusively. 
A logic-based approach based on event-calculus to 
recognise human activities from video content has been put forth by Artikis et al.~\cite{Artikis2013}.
Gabalon~\cite{Gabaldon2009} and Blount and Gelfond~\cite{Blount2012}
also use ASP for recognising intended actions and hierarchically structured activities in the context of planning. 
It is important to stress that  we deal with the broader problem of activity reasoning while
goal and intend recognition is more narrowly focused on one particular inference problem.   


Different approaches that allow to reason about intentions and beliefs of agents based on modal logic and the BDI architecture  
have been studied for intelligent agents~\cite{Wooldridge2000}. As the focus of that work are artificial agents, 
an explicit notion of complex \emph{human} activity is not present.   

More closely related to our goal of representing hierarchical goal structures is HTN planning~\cite{Georgievski2014} which we discussed already in the introduction in more detail.
The main difference to our approach is that we provide 
(i)   an explicit definition of human activity,
(ii)  a high-level of abstraction justified by modelling the human being instead of artificial agents,
(iii) a simple modelling language with better computational properties, and
(iv)  a flexible solving approach for various reasoning tasks beyond planning.

Temporal logic with past operations for expressing properties of observations and preferences can be found in
other  work on plan and intend recognition~\cite{Sohrabi2016,Sohrabi2011}. 
An interval-based temporal logic has also been used for plan recognition in intelligent help systems~\cite{Bauer1994}, but
inference is based on expensive Prolog-style proof procedures.
A particular feature that distinguishes 
our approach from related work based on planning or other temporal logics are the comparatively low computational costs
of our temporal logic. 

ASP has been used for different applications that involve reasoning about human activities, \egc to implement the intelligent home monitoring system SINDI~\cite{Mileo2009} in the health-care domain.
There, ASP is used to reason about the health development of patients based on 
a context model and sensor networks.
Human activities have also been studied through the lens of argumentation~\cite{Nieves2013,Guerrero2017,Guerrero2018}, where ASP is used as reasoning engine as well. 
ASP has also been used for indoor positioning as part of assistive living environments~\cite{Mileo2011}.  
Do \emph{et al.}~\cite{Do2012} aim at recognising basic and complex activities from smartphone sensors. Basic activities are 
directly recognised from sensor data, \egc running, while complex ones are derived.
Recent work deals with extensions of ASP to support temporal operations in the input language and multi-shot ASP for computing temporal models~\cite{Cabalar2018}.

\section{Conclusion}\label{sec:conclusion}

The vision of this work is to develop an 
understanding of complex human activities from a KR perspective that
is informed by cultural-historical activity theory. 
The need to support a broad range of inference problems in activity reasoning as opposed to mere activity recognition motivates  a 
new light-weight temporal logic that serves as backbone for
high-level descriptions of activities with a focus on
subject-object relations, their hierarchical structure, and mediation.
This logic is effectively a weaker version of standard LTL so that the complexity of decision procedures for relevant reasoning tasks drops to \NP, which makes it especially appealing compared to related logic-based approaches.
  
We furthermore showcase ASP as a flexible KR shell for activity reasoning. The comparatively low complexity of the novel temporal logic allow for a uniform ASP encoding
so that  
we can deal with a wide range of inference problems in a declarative and flexible way. This includes recognition, prediction, and explanation tasks.  Also,  
we can harness the power of readily available solver technology which is an important consideration when designig intelligent systems with the human being in the loop.

A practical evaluation of our approach is outside the scope of this work and remains as a project for future work. 
Reactive aspects of activity reasoning in productive systems by, \egc exploiting stream reasoning in ASP~\cite{Gebser2013,Do2012}, will also be  an interesting future direction when heading towards real-world applications.




\end{document}